\algrenewcommand{\algorithmicrequire}{\textbf{Input:}}
\algrenewcommand{\algorithmicensure}{\textbf{Output:}}
\newcolumntype{P}[1]{>{\centering\arraybackslash}p{#1}}
\newcolumntype{M}[1]{>{\centering\arraybackslash}m{#1}}
\newtheorem{definition}{Definition}[]
\newtheorem{example}{Example}[]
\newtheorem{lemma}{Lemma}[]
\newtheorem{proposition}{Proposition}[]
\newtheorem{theorem}{Theorem}[]
\title{Extending Eigentrust with the Max-Plus Algebra}
\author{
Juan Afanador$^1$\and
Nir Oren$^1$\and
Maria Araujo$^2$\and
Murilo Baptista$^2$
 \\
$^1$Department of Computing Science, University of Aberdeen, Scotland, UK\\
$^2$Department of Physics, University of Aberdeen, Scotland, UK\\
}
\begin{document}

\maketitle

\begin{abstract}
Eigentrust is a simple and widely used algorithm, which quantifies trust based on the repeated application of an update matrix to a vector of initial trust values. In some cases, however, this procedure is rendered uninformative. Here, we characterise such situations and trace their origin to the algebraic conditions guaranteeing the convergence of the Power Method. We overcome the identified limitations by extending Eigentrust's core ideas into the Max-Plus Algebra. The empirical evaluation of our max-plus approach demonstrates improvements over Eigentrust.
\end{abstract}

\section{Introduction}
\label{sec:intro}

Eigentrust is a popular approach to the calculation of trust with local information in multi-agent systems \cite{kamvar2003eigentrust}. The functioning of the algorithm presupposes the free flowing of firsthand experiences across networks of agents upon request, thus improving or augmenting the information set of the querying agent. Eigentrust views trust as a normalised score of reliability computed with the querying agent's own experience, and the information gathered from its neighbours. 

The query may be replicated by any other agent within the initial vicinity, and their neighbours' neighbours. This information is aggregated into a single value, where the credibility parameters specified by the agent posing the query, serve as weights. The resulting measure indicates the level of trust placed in the agent about which the query was made.

More generally, whenever a query is issued, Eigentrust stores local trust scores in a matrix which then applies to an initial vector of trust values. An operation repeated until convergence to a fixed-point, in the expectation that the final vector would coincide with the dominant eigenvector of the matrix. This latter vector is, thereby, considered an accurate global ranking. 

Due to its simplicity, theoretical foundations and resulting empirical behaviour, Eigentrust is widely used in domains such as P2P systems \cite{marti2006taxonomy}, Internet-of-Things architectures \cite{azad2017m2m}, and Ad-hoc Sensor Networks \cite{theodorakopoulos2006trust}. While Eigentrust performs particularly well in networks of homogeneous agents such as P2P systems, it is often lacking in more diverse environments. A situation we attempt to overcome.

In this paper we study the performance of Eigentrust in networks with various degrees of connectivity, and describe the cases where the algorithm accurately predicts global trust scores, others where it is somehow inaccurate but useful (e.g., where it can spot malicious behaviour), and those where it may be misleading. We provide a theoretical characterisation of all these cases to build on Eigentrust's core ideas toward a more generally applicable procedure. 

Our goal is to formulate an algorithm that operates across more diverse environments than Eigentrust does. We argue that Eigentrust performs poorly in, precisely, those cases where convergence to the dominant eigenvector does not occur. By framing the trust-measuring problem within a different algebraic structure --- the Max-Plus Algebra --- we are able to obtain informative trust ratings in these situations where Eigentrust fails.

The remainder of this paper is structured as follows. The next section introduces the Eigentrust algorithm and the algebraic assumptions it is founded upon. Section \ref{sec:MaxTrust} describes the Max-Plus Algebra,  and characterises the corresponding trust measuring problem. In Section \ref{sec:eval} we present the results of our approach. The last two sections discuss our findings and provide suggestions for future work.

\section{Eigentrust}
\label{sec:eigentrust}
\subsection{The Eigentrust Algorithm}

Eigentrust considers interactions between pairs of agents and observes the corresponding outcome \cite{kamvar2003eigentrust}. If we let $s_{ij}$ denote the difference between the number of successful and unsuccessful interactions between agents $i$ and $j$, then $c_{ij}=max(s_{ij},0)/\sum_k max(s_{ik},0)$ can be viewed as a normalised measure of trust between $i$ and $j$. Eigentrust assumes that trust is transitive, i.e.,  $i$'s trust in $k$ can be computed from its level of trust in $j$, and $j$'s trust in $k$:

\begin{equation}
    \label{eq:trust_score}
    t_{ik}=\sum_j c_{ij}c_{jk};
\end{equation}

\noindent this property means that trust in every agent within a connected component, can be computed in a similar manner.

The set of $c_{ij}$ values can be represented as a \emph{trust matrix} $\mathbf{C}$ of local trust scores. To capture the transitive nature of trust, $\mathbf{C}$ must be applied to a vector of initial trust values $\mathbf{r}$, depending on the relative position of the agent about whom the query was made. Repeating this multiplication incorporates direct and indirect trust information about all relevant neighbours, into the production of a vector of stable trust values for every agent in the system. 

As discussed in Section \ref{sec:eigentrust}, the initial vector against which multiplication takes place can (effectively) be random as long as the operation is carried out to the right of $\mathbf{r}$. This process guarantees that the final trust ranking converges to the ``true" distribution of trust values as given by the matrix's dominant eigenvector ---provided it is unique. The procedure by which the dominant eigenvector is calculated is called the \emph{Power Method} \cite{chu1993multivariate}. Algorithm \ref{algo:eigentrust} summarises the Eigentrust approach to quantifying trust. We note that the final trust value associated with agent $i$ occurs at index $i$ within the eigenvector.

\begin{center}
	\begin{minipage}{\columnwidth}
		\centering
		\begin{algorithm}[H]
			\caption{Eigentrust}
			\label{algo:eigentrust}
			\begin{algorithmic}[1]
				\State $\mathbf{t}^{(0)} \gets \mathbf{r}$ \label{alg:eigentrust:op0}
				\Repeat
				\State
				$\mathbf{t}^{(k+1)}\gets  \mathbf{C}^{T}\mathbf{t}^{(k)}$ \label{alg:eigentrust:op1}
				\State $\delta\gets |\mathbf{t}^{(k+1)}-\mathbf{t}^{(k)}|$ \label{alg:eigentrust:op2}
				\Until{$\delta<\epsilon$}
			\end{algorithmic}
		\end{algorithm}
	\end{minipage}
\end{center}

\subsection{The Algebraic Conditions Behind Eigentrust } 

Since Eigentrust will only operate correctly when the Power Method converges, we now consider one case where Eigentrust performs poorly due to the nonexistence of the dominant eigenvector, and the outright unattainability of convergence. We then characterise two situations where the Power Method is fully applicable. 

\subsubsection{Diagonalisable Square Matrices}

Suppose the matrix $\mathbf{C}$ satisfies the following two conditions:

\begin{enumerate}
	\item  $\mathbf{C}$ is diagonalisable, i.e., there exists an invertible matrix $\mathbf{P}$ such that
	\begin{equation}
	\mathbf{D}=\mathbf{P}^{-1}\mathbf{C}\mathbf{P}
	\end{equation}\label{diagonalisable}
	where $\mathbf{D}$ is a diagonal matrix; and 
	\item $\mathbf{C}$ has a dominant eigenvalue  $\lambda_0$. That is, if $\lambda_{0},\dots, \lambda_{n-1}$ are the eigenvalues of $\mathbf{C}$, then it is the case that 
	$
	|\lambda_{0}|>|\lambda_{i}|
	$
	for $i=1,\dots,n-1$. The eigenvector associated with the dominant eigenvalue is termed the \emph{dominant eigenvector}.
\end{enumerate}

In such a situation, the Power Method converges to the dominant eigenvector of $\mathbf{C}$. Note that if the matrix $\mathbf{C}$ satisfies the two conditions above, then these results also apply to its transpose.

\begin{example}
	\label{ex:e2}
    The following matrix depicts a situation where agents $a_0$, $a_1$ and $a_2$ have limited information about one another.
	\[
	A = \bordermatrix{~ & a_0 & a_1 & a_2 \cr
            		   &  0.75 & 0 & 0.25 \cr 
                       & 0 & 1 & 0 \cr
                       & 0.25 & 0 & 0.75 \cr}
	\]
	Despite being diagonalisable, the three eigenvalues of $\mathbf{A}$ are $\lambda_{0}=\lambda_{1}=1$ and $\lambda_{2}=0.5$. As $\mathbf{A}$ does not have a dominant eigenvalue, the convergence of the Power Method cannot be guaranteed. The lack of connectivity between agents induces an unstable outcome. 
	
	While Eigentrust could --- potentially --- be applied to each connected component in such a graph, this would require previous knowledge of the network structure. Furthermore, in dynamic situations, questions arise as to how trust across connected components should be merged when the topology of the network changes.

\end{example}

\begin{example}
	\label{ex:e1}

Consider the following trust matrix.
	\[
	\mathbf{B}=\bordermatrix{~ & a_0 & a_1 & a_2 \cr
                            	& 0.1 & 0.55 & 0.35 \cr
                            	& 0 & 0.8 & 0.2 \cr
                            	& 0 & 0 & 1 \cr}
	\]
	\noindent This matrix  is upper triangular with distinct diagonal entries, and has eigenvalues $\lambda_{0}=1$, $\lambda_{1}=0.8$ and $\lambda_{2}=0.1$. Therefore, for a random vector $\mathbf{v}\in \mathbb{R}^{3}$ it is the case that 
	$\lim_{k\longrightarrow \infty}\left(\mathbf{B}^{T}\right)^{k}\cdot\mathbf{v} = \vec{\pi}$, where $\vec{\pi}=(0,0,1)^{T}$ is the eigenvector associated with the dominant eigenvalue $\lambda_{0}=1$. Therefore, according to this matrix and by extension to Eigentrust, only agent $a_2$ can be trusted. A decision maker thus views all other options as equally irrelevant, which may not be informative enough in some applications; for it is only restating the same information already conveyed by the trust matrix.
\end{example}

\subsubsection{Positive Stochastic Matrices}

	Suppose $\mathbf{C}$ is a square, positive and stochastic matrix.
	Using the Perron-Frobenius Theorem \cite{pillai2005perron} and the Jordan decomposition of $\mathbf{C}$, it is possible to show that there exists a unique dominant eigenvector and the limit $\lim_{k\longrightarrow \infty}\left(\mathbf{C}^{T}\right)^{k}\cdot\mathbf{v}$ exists and converges to the same value for any initial random vector $\mathbf{v}$. Note that the existence of a dominant eigenvector is a consequence of the three previous conditions. Lemma \ref{lem:maximal_eigenvalue} and Proposition \ref{prop: eigentrust pr-eigenvector}, which consider a more general case, are built upon these observations.

\begin{example}
	\label{ex:e3}
	Consider the following matrix.
	\[
	\mathbf{C}=\bordermatrix{~ & a_0 & a_1 & a_2 \cr
                                & 0.15 & 0.55 & 0.3 \cr 
                            	& 0.41 & 0.53 & 0.06 \cr
                            	& 0.18 & 0.62 & 0.2 \cr}
	\]
	Its dominant eigenvalue and eigenvector are $\lambda_0=1$ and $\vec{1}$. Thus, for any random ranking $\mathbf{v}$ we have $\left(\mathbf{C}^{T}\right)^{k}\cdot\mathbf{v} \longrightarrow \vec{\pi}$, as $k\longrightarrow \infty$, where $\vec{\pi} = (0.3, 0.6, 0.1)^{T}$, implying that the interactions with agents $a_0$ and $a_1$ will be more frequent. Here, the positive, square and stochastic nature of such a matrix ensures that Eigentrust works as expected.
\end{example}

Upper triangularity with distinct diagonal entries in Example \ref{ex:e1} guarantees diagonalisability, hence convergence of the Power Method. In Example \ref{ex:e2}, however, convergence is not attained despite diagonalisability via symmetry. Convergence in Example \ref{ex:e3} is guaranteed via the  Perron-Frobenius Theorem for positive (stochastic) matrices. We argue that Eigentrust's performance depends on whether a dominant unique eigenvector does, or does not exist, and that in many situations this is not the case. 

The Perron-Frobenius Theorem has been generalised to cater for non-negative and irreducible matrices \cite{seneta2006non}. Our aim is to build on these results in the context of the Max-Plus Algebra, providing a new trust-measuring procedure also applicable to reducible matrices. In the following we will introduce a version of the Perron-Frobenius Theorem for irreducible matrices, and consider its implications to the Eigentrust algorithm. 

\subsubsection{Non-negative Stochastic Matrices}

\begin{definition}[Irreducible matrix \cite{seneta2006non}]
	An n $\times$ n matrix $\mathbf{A} $ is said to be \emph{irreducible} if there is no permutation of coordinates such that:
	
	\begin{equation}
	\label{reducible}
	\mathbf{PAP}^{T}=\left(
	\begin{array}{*2{c}}
	\mathbf{A}_{11} &   \mathbf{A}_{12}   \\
	\mathbf{0}   &   \mathbf{A}_{22}   \\
	\end{array}
	\right)
	\end{equation}
	where $\mathbf{P}$ is an $n\times n$ permutation matrix with each row and column having a single entry equal to one and the rest full of zeros; while $\mathbf{A}_{11}$ and $\mathbf{A}_{22}$ are non-trivial (i.e., their size is greater than 0) square matrices. In other words, an irreducible matrix cannot be converted into a block upper-triangular matrix via simultaneous row/column permutations. A matrix is \emph{reducible} if it is not irreducible.
\end{definition}

\begin{theorem}[Perron - Frobenius Theorem \cite{pillai2005perron}]
    \label{theo:perron_frobenius}
	If $\mathbf{C}=(c_{ij})$ is an $n\times n$ irreducible non-negative matrix with spectral radius $\rho(\mathbf{C})=\lambda_{0}$ , then:
	
	\begin{enumerate}
		\item $\lambda_0\in\mathbb{R}^{+}$ is a simple eigenvalue of  $\mathbf{C}$, called the Perron-Frobenius eigenvalue.
		\item $\lambda_0$ can be associated with unique (up to a constant) and strictly positive left and right eigenvectors. 
	\end{enumerate}
\end{theorem}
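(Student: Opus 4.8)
The plan is to prove the theorem in three stages: first producing a positive real eigenvalue carrying a non-negative eigenvector, then invoking irreducibility to force that eigenvector to be strictly positive and to identify the eigenvalue with the spectral radius, and finally establishing simplicity and uniqueness.

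For the first stage I would proceed variationally via the Collatz--Wielandt function $r(\mathbf{x}) = \min_{i:\, x_i>0} (\mathbf{C}\mathbf{x})_i / x_i$, defined on the non-negative nonzero vectors, and set $\lambda_0 = \sup_{\mathbf{x}} r(\mathbf{x})$. Since $r$ is scale-invariant the supremum may be taken over the compact standard simplex; a mild upper-semicontinuity argument (restricting attention to the strictly positive image vectors $(\mathbf{I}+\mathbf{C})^{n-1}\mathbf{x}$, so as to avoid the discontinuity of $r$ on the boundary) shows the supremum is attained at some $\mathbf{v}\geq 0$ with $\mathbf{C}\mathbf{v} = \lambda_0 \mathbf{v}$ and $\lambda_0 > 0$. (Equivalently, one may apply Brouwer's fixed point theorem to the map $\mathbf{x}\mapsto \mathbf{C}\mathbf{x}/\|\mathbf{C}\mathbf{x}\|_1$.)

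The second stage exploits the combinatorial content of irreducibility: it is equivalent to strong connectivity of the directed graph of $\mathbf{C}$, which yields the key fact that $(\mathbf{I}+\mathbf{C})^{n-1}$ is entrywise strictly positive. Applying this operator to the eigenvector gives $(1+\lambda_0)^{n-1}\mathbf{v} = (\mathbf{I}+\mathbf{C})^{n-1}\mathbf{v} > 0$, so $\mathbf{v}$ is strictly positive. To see that $\lambda_0 = \rho(\mathbf{C})$, take any eigenpair $(\mu,\mathbf{w})$ and apply the triangle inequality componentwise to $\mu\mathbf{w} = \mathbf{C}\mathbf{w}$, obtaining $|\mu|\,|\mathbf{w}| \leq \mathbf{C}|\mathbf{w}|$ and hence $|\mu| \leq r(|\mathbf{w}|) \leq \lambda_0$; thus no eigenvalue can exceed $\lambda_0$ in modulus.

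The final and most delicate stage is simplicity. Geometric simplicity follows because two linearly independent real eigenvectors for $\lambda_0$ could be combined into a non-negative eigenvector having a zero coordinate, contradicting the strict positivity just established. The hard part is ruling out generalized eigenvectors (algebraic simplicity): here I would pair the strictly positive right eigenvector $\mathbf{u}$ against the strictly positive left eigenvector $\mathbf{v}$ obtained by applying the entire argument to $\mathbf{C}^{T}$. If some $\mathbf{z}$ satisfied $(\mathbf{C}-\lambda_0\mathbf{I})\mathbf{z} = \mathbf{u}$, then left-multiplying by $\mathbf{v}^{T}$ and using $\mathbf{v}^{T}(\mathbf{C}-\lambda_0\mathbf{I}) = 0$ would force $\mathbf{v}^{T}\mathbf{u} = 0$, which is impossible for two strictly positive vectors. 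This contradiction yields algebraic multiplicity one, and simultaneously delivers the uniqueness (up to a constant) and strict positivity of both the left and right Perron eigenvectors asserted in the statement.
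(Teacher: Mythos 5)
The paper does not actually prove this statement: Theorem~\ref{theo:perron_frobenius} is imported from the literature (\cite{pillai2005perron}) and used as a black box, so there is no in-paper argument to compare yours against. Judged on its own terms, your proposal is a correct outline of the classical Collatz--Wielandt proof: the variational characterisation $\lambda_0=\sup_{\mathbf{x}} r(\mathbf{x})$, the use of $(\mathbf{I}+\mathbf{C})^{n-1}>0$ (equivalent to strong connectivity of the digraph) both to make the supremum attained and to force strict positivity, the componentwise triangle inequality giving $|\mu|\le\lambda_0$ for every eigenvalue $\mu$, and the pairing of the strictly positive left and right eigenvectors via $\mathbf{v}^{T}\mathbf{u}>0$ to exclude a Jordan block of size two. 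Two points deserve to be made explicit in a full write-up. First, in the geometric-simplicity step, the combination $\mathbf{v}-t^{*}\mathbf{w}$ (with $t^{*}$ the extremal ratio) yields a \emph{non-negative} eigenvector with a zero coordinate, and the contradiction comes from the fact that \emph{every} non-negative eigenvector of $\mathbf{C}$ is strictly positive by the $(\mathbf{I}+\mathbf{C})^{n-1}$ argument---not merely from the positivity of the particular maximiser already constructed. Second, the algebraic-simplicity step requires geometric simplicity first (so that $(\mathbf{C}-\lambda_0\mathbf{I})\mathbf{z}$ can be normalised to equal $\mathbf{u}$), and it requires that $\mathbf{C}^{T}$ is also irreducible (reversing all arcs preserves strong connectivity) so that the strictly positive left eigenvector exists with the same $\lambda_0$. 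Both are routine and your sketch gestures at them, so these are presentation remarks rather than gaps.
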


\begin{lemma}[\cite{berman1994nonnegative}]
    \label{lem:maximal_eigenvalue}
	Let $\mathbf{C}$ denote an irreducible non-negative stochastic matrix. The set of eigenvalues of  $\mathbf{C}$ has a maximal eigenvalue equal to $1$, and an associated left eigenvector describing a probability distribution $\vec{\pi}$ over the set of interacting nodes.
\end{lemma}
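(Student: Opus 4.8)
The plan is to display $1$ explicitly as an eigenvalue, bound every eigenvalue by $1$ in modulus so that $1$ is seen to be maximal, and then feed the value $\rho(\mathbf{C})=1$ into the Perron-Frobenius Theorem (Theorem \ref{theo:perron_frobenius}) to obtain the strictly positive left eigenvector, which normalises to the claimed probability distribution.

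First I would use stochasticity. Because the rows of $\mathbf{C}$ sum to one, $\mathbf{C}\vec{1}=\vec{1}$ with $\vec{1}=(1,\dots,1)^{T}$, so $1$ is an eigenvalue of $\mathbf{C}$ with right eigenvector $\vec{1}$. Next I would check that no eigenvalue exceeds $1$ in modulus. Given any eigenpair $\mathbf{C}\mathbf{x}=\lambda\mathbf{x}$ with $\mathbf{x}\neq\mathbf{0}$, I would choose an index $m$ maximising $|x_m|$; from the $m$-th row $\sum_j c_{mj}x_j=\lambda x_m$ together with $c_{mj}\geq 0$ and $\sum_j c_{mj}=1$, the triangle inequality yields $|\lambda|\,|x_m|\leq\sum_j c_{mj}|x_j|\leq|x_m|$, hence $|\lambda|\leq 1$. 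Equivalently, $\rho(\mathbf{C})\leq\|\mathbf{C}\|_{\infty}=1$. Combining the two observations gives $\rho(\mathbf{C})=\lambda_0=1$, so $1$ is indeed the maximal eigenvalue.

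With $\rho(\mathbf{C})=1$ established, I would invoke Theorem \ref{theo:perron_frobenius}: since $\mathbf{C}$ is irreducible and non-negative, $\lambda_0=1$ is a simple eigenvalue admitting a unique (up to a positive scalar) strictly positive left eigenvector $\mathbf{p}$, i.e. $\mathbf{p}^{T}\mathbf{C}=\mathbf{p}^{T}$. Because every entry $p_i>0$, the sum $\sum_i p_i$ is positive, and setting $\vec{\pi}=\mathbf{p}/\sum_i p_i$ preserves the eigenvector relation $\vec{\pi}^{T}\mathbf{C}=\vec{\pi}^{T}$ while forcing $\pi_i>0$ and $\sum_i\pi_i=1$. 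Thus $\vec{\pi}$ is a probability distribution over the nodes --- precisely the stationary distribution of the Markov chain induced by $\mathbf{C}$ --- which completes the argument.

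I expect the one genuine obstacle to be confirming that the Perron-Frobenius eigenvalue actually coincides with $1$, rather than sitting elsewhere in the spectrum; this is exactly what the row-sum (Gershgorin-type) bound of the second step secures. Once $\rho(\mathbf{C})=1$ is pinned down, simplicity, positivity and uniqueness are delivered directly by the theorem, leaving only the routine normalisation of the left eigenvector.
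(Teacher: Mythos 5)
Your proof is correct: the paper itself offers no proof of this lemma (it is quoted directly from the cited reference \cite{berman1994nonnegative}), and your argument --- exhibiting $1$ as an eigenvalue via row-stochasticity, bounding the spectral radius by the row-sum norm, and then invoking Theorem \ref{theo:perron_frobenius} to get the simple eigenvalue and strictly positive left eigenvector, normalised to a probability distribution --- is precisely the standard argument that reference uses. No gaps; the one point you flagged as delicate (identifying $\rho(\mathbf{C})$ with $1$) is handled correctly by your maximal-coordinate estimate.
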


Given the properties of the trust matrix described before and Lemma \ref{lem:maximal_eigenvalue}, it trivially follows that:

\begin{proposition} 
	\label{prop: eigentrust pr-eigenvector}
	If $\mathbf{C}$ is an irreducible non-negative trust matrix, then the Eigentrust algorithm yields the right eigenvector associated to the Perron-Frobenius eigenvalue of $\mathbf{C}$.
\end{proposition}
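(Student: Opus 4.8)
The plan is to read off what Algorithm~\ref{algo:eigentrust} actually computes and then invoke the spectral structure guaranteed by Theorem~\ref{theo:perron_frobenius} and Lemma~\ref{lem:maximal_eigenvalue}. Unrolling the update on line~\ref{alg:eigentrust:op1} gives $\mathbf{t}^{(k)} = \left(\mathbf{C}^{T}\right)^{k}\mathbf{r}$, so the statement reduces to showing that this sequence converges (after normalisation) to the Perron--Frobenius eigenvector. My first step is to transfer the hypotheses from $\mathbf{C}$ to $\mathbf{C}^{T}$: transposition preserves the characteristic polynomial, hence the spectrum and the spectral radius $\lambda_{0}=1$; and a reducing permutation for $\mathbf{C}^{T}$ would, after a block-reversal relabelling, reduce $\mathbf{C}$, so $\mathbf{C}^{T}$ is again irreducible and non-negative. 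Theorem~\ref{theo:perron_frobenius} and Lemma~\ref{lem:maximal_eigenvalue} therefore apply to $\mathbf{C}^{T}$, whose simple dominant eigenvalue $\lambda_{0}=1$ carries a strictly positive eigenvector $\vec{\pi}$ (the left eigenvector of $\mathbf{C}$, i.e. the distribution singled out in Lemma~\ref{lem:maximal_eigenvalue} and matching $\vec{\pi}$ of Example~\ref{ex:e3}).

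Next I would decompose the iteration along the eigenstructure of $\mathbf{C}^{T}$. Writing $\mathbf{r} = \alpha\,\vec{\pi} + \mathbf{w}$, where $\mathbf{w}$ lies in the $\mathbf{C}^{T}$-invariant complement spanned by the remaining (generalised) eigenvectors, simplicity of $\lambda_{0}$ lets me use the associated spectral projection to express $\alpha$ as an inner product of $\mathbf{r}$ with the matching positive eigenvector; since that eigenvector is strictly positive and $\mathbf{r}$ is a non-negative, non-zero initial ranking, $\alpha\neq 0$. Applying $\left(\mathbf{C}^{T}\right)^{k}$ then fixes the $\vec{\pi}$-component (its multiplier is $1^{k}=1$) while each complementary component is scaled by $\lambda_{i}^{k}$. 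Provided every other eigenvalue satisfies $|\lambda_{i}|<1$, the complement decays and $\mathbf{t}^{(k)}\to\alpha\,\vec{\pi}$, which after the normalisation implicit in Eigentrust is exactly the Perron--Frobenius eigenvector selected by the iteration.

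The main obstacle is precisely the proviso in the last step: irreducibility alone guarantees that $\lambda_{0}=1$ is \emph{simple}, but not that it is \emph{strictly} dominant in modulus. An imprimitive (periodic) stochastic matrix has further eigenvalues on the unit circle --- the extreme case being a permutation matrix such as $\left(\begin{smallmatrix}0&1\\1&0\end{smallmatrix}\right)$, whose eigenvalues are $\pm 1$ and for which the iterates oscillate and never settle. To make the convergence claim literally true I would strengthen the hypothesis to $\mathbf{C}$ being \emph{primitive} (equivalently, its trust graph being aperiodic), so that the sharper form of Theorem~\ref{theo:perron_frobenius} forces $|\lambda_{i}|<1$ for $i\geq 1$; the positive stochastic matrices treated earlier are a special primitive case. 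Without aperiodicity the conclusion should instead be stated for the Ces\`aro averages $\frac{1}{K}\sum_{k=0}^{K-1}\mathbf{t}^{(k)}$, which always converge to $\alpha\,\vec{\pi}$. I would flag this gap explicitly rather than rely on the word ``trivially'', since it is the only genuinely load-bearing point of the argument.
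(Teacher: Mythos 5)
Your proposal is correct, and it is considerably more careful than what the paper itself provides: the paper offers no proof at all, merely asserting that the proposition ``trivially follows'' from Lemma~\ref{lem:maximal_eigenvalue} and the properties of the trust matrix. Your unrolling of Algorithm~\ref{algo:eigentrust} into powers of $\mathbf{C}^{T}$, the transfer of irreducibility and of the spectral radius to the transpose, and the spectral-projection argument showing $\alpha\neq 0$ is precisely the argument the paper leaves implicit, so you are following the intended route while actually writing it down. One terminological point you handle correctly but the paper does not: since the iteration is $\mathbf{t}^{(k+1)}=\mathbf{C}^{T}\mathbf{t}^{(k)}$, the limit is a right eigenvector of $\mathbf{C}^{T}$, i.e.\ the \emph{left} Perron--Frobenius eigenvector $\vec{\pi}$ of $\mathbf{C}$ (the distribution of Lemma~\ref{lem:maximal_eigenvalue}); the proposition's phrase ``right eigenvector of $\mathbf{C}$'' only makes sense if read as ``right eigenvector of $\mathbf{C}^{T}$''.

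More importantly, the proviso you flag is not a defect of your write-up but a genuine gap in the paper's claim. Irreducibility yields simplicity of $\lambda_{0}=1$ but not strict dominance in modulus: an irreducible, imprimitive stochastic matrix such as $\left(\begin{smallmatrix}0&1\\1&0\end{smallmatrix}\right)$ has $-1$ in its spectrum, the iterates $\left(\mathbf{C}^{T}\right)^{k}\mathbf{r}$ oscillate with period two, and Algorithm~\ref{algo:eigentrust} never satisfies its termination criterion, so the proposition as stated is false without an aperiodicity (primitivity) hypothesis or a reinterpretation via Ces\`aro averages, exactly as you say. The paper's surrounding discussion avoids this issue only in the positive stochastic case (Example~\ref{ex:e3}), where positivity implies primitivity; irreducibility alone, which is all the proposition assumes, does not. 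Your strengthened hypothesis is the minimal repair, and flagging it explicitly rather than hiding behind ``trivially'' is the right call.
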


\section{Trust Measuring in the Max-Plus Algebra}
\label{sec:MaxTrust}

Eigentrust does not distinguish the lack of interactions between two nodes, from the impossibility of such interactions due to the network's topology. We redefine $\mathbf{C}=(C_{ij})$ over the corresponding set of edges ($E$) to differentiate between these two cases, as follows:

\begin{equation}
\label{eq:epsilon_notation}
C_{ij} = 
\begin{cases}
c_{ij}       & \quad \text{if } (i,j)\in E\\
-\infty  & \quad \text{otherwise}
\end{cases}
\end{equation}

This differentiation gives rise to a distinct algebraic structure that facilitates the measuring of trust. The algebraic structure is an idempotent, commutative semiring (dioid) known as the \textit{Max-Plus Algebra} \cite{gaubert1997methods}, or \textit{max-plus} for short. Within max-plus we extend the Eigentrust algorithm, to cater for multi-agent systems with reducible trust matrices. To this end, we must consider how basic operations can be performed within max-plus, before examining trust within the algebra.

\subsection{Summing and Multiplying in Max-Plus}

\begin{definition}[Max-Plus \cite{gaubert1997methods}]
	Let $\mathbb{R}_{max}=\mathbb{R} \cup \{\varepsilon\}$ be the union of the set of real numbers $\mathbb{R}$ and $\varepsilon=-\infty$. Given $x,y \in  \mathbb{R}_{max}$, we define the following two operations.
	
	\begin{eqnarray}
	x\oplus y&=&\max(x,y) \\ \nonumber
	x\otimes y&=& x+y
	\end{eqnarray}
	
	\noindent  The set $\left(\mathbb{R}_{max}, \oplus, \otimes\right)$ constitutes a semiring commonly known as the \emph{Max-Plus Algebra}. 
\end{definition}

Since $x\oplus \varepsilon= x$ and $x\otimes 0=x$ for every $x\in \mathbb{R}_{max}$, $\varepsilon$ and $0$ are the neutral elements of the $\oplus$ and $\otimes$ operations  respectively. The term $e$ is preferred for referring to the latter, as to avoid confusion with $0\in \mathbb{R}$. Also note that the Max-Plus Algebra is an idempotent semiring in relation to $\oplus$, as $x\oplus x=x$ for any $x\in \mathbb{R}_{max}$.

Addition and multiplication in max-plus can be naturally extended to matrices by replacing the usual $``+"$ and $``\cdot"$ operators, with $\oplus$ and $\otimes$. The $m\times n$ \emph{zero matrix} is denoted by $\mathcal{E}$, such that $\mathcal{E}_{ij}=\varepsilon$ for all $i,j$. The $n\times n$ \emph{identity matrix}, $\mathbf{E}_{n}$, takes the form: 

\begin{equation*}
[E_{n}]_{ij}=\left\{ \begin{array}{c}
e \hspace{0.2cm} \text{ if } \hspace{0.2cm} i=j\\
\varepsilon \hspace{0.2cm} \text{ if } \hspace{0.2cm} i\neq j\\
\end{array}
\right.
\end{equation*}

The power of a matrix $\mathbf{A} \in \mathbb{R}_{max}^{n\times n}$ is inductively defined as $\mathbf{A}^{\otimes^{0}}\equiv\mathbf{E}_{n}$, and $\mathbf{A}^{\otimes^{k}}= \mathbf{A}\otimes \mathbf{A}^{\otimes^{k-1}}$ for $k>0$. Eigenvalues and eigenvectors can be described within the Max-Plus Algebra as follows.

\begin{definition}[Eigenvalues and eigenvectors]
	Let $\mathbf{A}\in\mathbb{R}_{max}^{n\times n}$, and consider the scalars $\lambda\in\mathbb{R}_{max}$, and vectors $\mathbf{v}\neq (\varepsilon,\varepsilon,\dots,\varepsilon) \in\mathbb{R}_{max}^n$ satisfying:
	\begin{equation}
	\label{eq:maxplus-eigenvalue}
	\mathbf{A}\otimes \mathbf{v}=\lambda \otimes \mathbf{v}
	\end{equation}
	$\lambda$ and $\mathbf{v}$ are referred to as the \textit{eigenvalues} and \textit{eigenvectors} of $\mathbf{A}$, respectively.
\end{definition}

Eigenvalues may be equal to $\varepsilon$. Both eigenvalues and eigenvectors may not be unique. The derivation of eigenvectors, i.e., the solutions to Equation \eqref{eq:maxplus-eigenvalue}, can be expressed more readily through  a linear optimization problem: $\max_j(a_{ij}+v_j)=\lambda+v_i$, where $\mathbf{A}=(a_{ij})$ and $\mathbf{v}=(v_{1},v_{2},\dots,v_{n})$.

Within the Max-Plus Algebra, the Perron-Frobenius Theorem takes on a more succinct form:

\begin{theorem}[Perron-Frobenius Theorem in Max-Plus \cite{akian1998asymptotics}]
\label{theo:perron_frobenius_max_plus}
	An \emph{irreducible} matrix $\mathbf{A}\in\mathbb{R}_{max}^{n\times n}$ has a unique dominant eigenvalue such that:
	\[\lambda_0=\bigoplus_{i=1}^{n}tr(A^{i})^{1/i}\]
\end{theorem}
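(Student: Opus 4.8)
The plan is to pass to the weighted-digraph picture and reduce the claim to the classical maximum-cycle-mean identity. I would associate with $\mathbf{A}$ the directed graph $\mathcal{G}(\mathbf{A})$ on nodes $\{1,\dots,n\}$ carrying an arc $j\to i$ of weight $a_{ij}$ whenever $a_{ij}\neq\varepsilon$, and recall that $\mathbf{A}$ is irreducible exactly when $\mathcal{G}(\mathbf{A})$ is strongly connected. A straightforward induction on $k$, unwinding the definition of $\otimes$ for matrices, shows that $(\mathbf{A}^{\otimes^{k}})_{ij}$ is the largest total weight of any walk of length $k$ from $j$ to $i$ (and $\varepsilon$ if no such walk exists). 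Hence $tr(\mathbf{A}^{\otimes^{i}})=\bigoplus_j(\mathbf{A}^{\otimes^{i}})_{jj}$ is the maximum weight of a closed walk of length $i$, and the $i$-th term $tr(\mathbf{A}^{\otimes^{i}})^{1/i}$ (division by $i$ in ordinary arithmetic, since $x^{\otimes^{i}}=i\cdot x$) is the best achievable average arc-weight over closed walks of length $i$.

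Next I would identify the right-hand side with the maximum cycle mean $\mu(\mathbf{A})=\max_{c}\,w(c)/\ell(c)$, the maximum of total weight over length taken over all cycles $c$ of $\mathcal{G}(\mathbf{A})$. Every closed walk decomposes into elementary cycles, so its mean is a length-weighted average of elementary-cycle means and therefore cannot exceed $\mu(\mathbf{A})$; conversely each elementary cycle is itself a closed walk. Because an elementary cycle visits no node twice, its length is at most $n$, so restricting $i$ to $1,\dots,n$ discards nothing, yielding $\bigoplus_{i=1}^{n}tr(\mathbf{A}^{\otimes^{i}})^{1/i}=\mu(\mathbf{A})$.

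It remains to prove that $\mu(\mathbf{A})$ is the one and only eigenvalue. For existence I would normalise by setting $\tilde{\mathbf{A}}=(-\mu(\mathbf{A}))\otimes\mathbf{A}$, so that every cycle of $\mathcal{G}(\tilde{\mathbf{A}})$ has nonpositive mean with at least one \emph{critical} cycle of mean zero. The Kleene star $\tilde{\mathbf{A}}^{*}=\bigoplus_{k\ge 0}\tilde{\mathbf{A}}^{\otimes^{k}}$ then converges entrywise --- no walk improves its weight by repeating a cycle --- and is finite by strong connectivity; any column of $\tilde{\mathbf{A}}^{*}$ indexed by a node on a critical cycle is a finite eigenvector of $\tilde{\mathbf{A}}$ for eigenvalue $e$, hence of $\mathbf{A}$ for eigenvalue $\mu(\mathbf{A})$. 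For uniqueness, if $\mathbf{A}\otimes\mathbf{v}=\lambda\otimes\mathbf{v}$ with $\mathbf{v}$ not identically $\varepsilon$, irreducibility forces $\mathbf{v}$ to be finite; iterating gives $\mathbf{A}^{\otimes^{k}}\otimes\mathbf{v}=(k\lambda)\otimes\mathbf{v}$, and evaluating this identity along an arbitrary closed walk and along a critical cycle squeezes $\lambda$ to equal $\mu(\mathbf{A})$.

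I expect the construction-and-convergence step to be the crux: verifying that $\tilde{\mathbf{A}}^{*}$ is exactly finite under nonpositive cycle means, and that critical-cycle columns are genuine (everywhere-finite) eigenvectors rather than degenerate solutions. The companion difficulty is showing that every eigenvector of an irreducible matrix has all coordinates in $\mathbb{R}$, since the averaging argument for uniqueness operates coordinate-wise; this is precisely where strong connectivity of $\mathcal{G}(\mathbf{A})$ is indispensable.
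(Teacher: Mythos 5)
The paper does not actually prove this theorem: it is imported from \cite{akian1998asymptotics} and used as a black box (notably inside the proof of Theorem \ref{theo:reducible_system}), so there is no in-paper argument to compare yours against; what can be assessed is the soundness of your proposal on its own, measured against the standard literature proof it reconstructs. Your skeleton is the correct classical one: the walk-weight interpretation of $(\mathbf{A}^{\otimes^{k}})_{ij}$, the identification of $\bigoplus_{i=1}^{n} tr(\mathbf{A}^{\otimes^{i}})^{1/i}$ with the maximum cycle mean $\mu(\mathbf{A})$ (closed-walk means are dominated by elementary-cycle means, and elementary cycles have length at most $n$), existence of an eigenvector via critical columns of the Kleene star of the normalised matrix $(-\mu(\mathbf{A}))\otimes\mathbf{A}$, and finiteness of eigenvectors from strong connectivity. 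The existence step you flag as the crux is indeed where the bookkeeping lives, but it closes in the standard way: for a critical node $j$ one has $\tilde{\mathbf{A}}^{+}_{jj}=e$, hence the $j$-th columns of $\tilde{\mathbf{A}}^{*}$ and $\tilde{\mathbf{A}}^{+}=\tilde{\mathbf{A}}\otimes\tilde{\mathbf{A}}^{*}$ coincide, which is precisely the eigenvector identity at eigenvalue $e$.

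The one genuine soft spot is the uniqueness step. Writing the eigenvalue equation coordinate-wise gives $a_{ij}+v_j\leq\lambda+v_i$ for all $i,j$; summing this around \emph{any} cycle (critical or not) makes the $v$-terms cancel and yields cycle mean $\leq\lambda$, i.e.\ $\mu(\mathbf{A})\leq\lambda$. So both of the evaluations you name --- along an arbitrary closed walk and along a critical cycle --- produce the \emph{same} inequality, and no squeeze results. The reverse direction $\lambda\leq\mu(\mathbf{A})$ needs an additional idea you do not state: a saturation argument (for every $i$ some $j$ attains the maximum in $(\mathbf{A}\otimes\mathbf{v})_i=\lambda+v_i$; following these tight arcs must eventually close into a cycle along which all inequalities are equalities, exhibiting a cycle of mean exactly $\lambda$), or alternatively an asymptotic decomposition of long near-optimal walks from the iterated identity $\mathbf{A}^{\otimes^{k}}\otimes\mathbf{v}=(k\lambda)\otimes\mathbf{v}$. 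This is a standard and easily supplied lemma, so the proposal is completable as written, but as stated the squeeze is not yet a proof.
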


Within our adapted version of Eigentrust, we will seek to find eigenvalues for reducible matrices. Such matrices can be rewritten in max-plus in \emph{normal form}.

\begin{definition}[Normal Form \cite{seneta2006non}]
	\label{eq:reducible_normal_form}
	Let $\mathbf{A}\in\mathbb{R}_{max}^{n\times n}$ be a reducible matrix, then its \textit{normal form} is the upper triangular matrix:
	
	\begin{equation}
	\label{eq:normal_form}
	\mathbf{A}=\left(
	\begin{array}{*5{c}}
	\mathbf{A}_{11}  &   \mathbf{A}_{12}  &   \hdots  &   \hdots  &   \mathbf{A}_{1n}\\
	\mathcal{E}  &   \mathbf{A}_{22}  &   \hdots  &   \hdots  &   \mathbf{A}_{2n}\\
	\mathcal{E} & \mathcal{E} &   \mathbf{A}_{33}  &   \hdots &   \vdots\\
	\vdots  &   \vdots  &   \vdots  &   \ddots  & \vdots\\
	\mathcal{E} & \mathcal{E}   &   \mathcal{E}  &   \hdots &  \mathbf{A}_{nn}\\
	\end{array}
	\right)
	\end{equation}
	
\noindent where $\mathbf{A}_{nn}$  is irreducible and the matrices $\mathbf{A}_{ii}$ are either irreducible or equal to $\varepsilon$, for all $1 \leq i \leq n$. The remaining block matrices that appear in Equation \eqref{eq:normal_form} are all different from $\mathcal{E}$. 
\end{definition}

\subsection{Measuring Trust in Max-Plus}

Consider line \ref{alg:eigentrust:op1} of Algorithm \ref{algo:eigentrust}. This operation updates the vector of trust values $\mathbf{t}$, effectively describing the evolution of a discrete system throughout $k$ iterations:
\begin{equation}
\label{eq:system}
\mathbf{t}(k + 1)=\mathbf{C}^{T}\mathbf{t}(k)
\end{equation}

If $\mathbf{C}$ is irreducible, the above equation cannot be further expanded, and Eigentrust would yield a satisfactory result on account of the Perron-Frobenius Theorem. The reducible case, on the other hand, leads to a more elaborate recurrence relation hindering Eigentrust's performance. Given $\mathbf{D}\equiv\mathbf{C}^{T}$ and considering its normal form we can rewrite Equation \eqref{eq:system} as \cite{konigsberg2009generalized}:

\begin{equation}
\label{eq:reducible_system}
\mathbf{t}(k + 1)=\mathbf{D}_{ii}\otimes \mathbf{t}_i(k) \oplus \bigoplus_{j=i+1}^{q}\mathbf{D}_{ij} \otimes \mathbf{t}_j(k), \forall k\leq 0
\end{equation}
where $\mathbf{D}_{ii}$ are irreducible or equal to $\epsilon$, for $i\leq n$; and $\mathbf{D}_{ij}\neq\mathcal{E}$, for $j=i+1$, $i\in\{0,1,\hdots, n-1\}$.

Provided $\mathbf{D}$ is reducible, there exist finite vectors $v_1, v_2, \hdots, v_n\in \mathbb{R}_{max}^{n\times 1}$ and scalars $\xi_1,\xi_2,\hdots, \xi_n\in\mathbb{R}$ producing a solution to Equation \eqref{eq:reducible_system}. More specifically, following \cite{konigsberg2009generalized}:

\begin{theorem}[Solution to a Reducible Dynamic System]
	\label{theo:reducible_system}
	The solution to the discrete dynamic system in equation \eqref{eq:reducible_system}, is given by:
	\begin{equation}
	\label{eq:max_trust_vector}
	t_i(k)=v_i\otimes\xi_i^{\otimes k}
	\end{equation}
	for all $k\geq0$ and $i\in\{1,2,\hdots,n\}$. The vectors $v_1, v_2, \hdots, v_n\in \mathbb{R}_{max}^{n\times 1}$ are finite, and the scalars $\xi_1,\xi_2,\hdots, \xi_n\in \mathbb{R}$ can be derived from the eigenvalues $\lambda_i$ of the irreducible block matrices $\mathbf{D}_{ij}$:
	\begin{equation}
	\label{eq:reducible_eigenvalues}
	\xi_i=\bigoplus_{j\in\mathcal{H}}\xi_j\oplus\lambda_j
	\end{equation}
	where $\mathcal{H}=\{j\in\{1,2,\hdots,n\}:j>i,\mathbf{D}_{ij}\neq\mathcal{E}\}.$
\end{theorem}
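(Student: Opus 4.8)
The plan is to exploit the block-upper-triangular normal form \eqref{eq:normal_form} of $\mathbf{D}$, which decouples the recurrence \eqref{eq:reducible_system} into a cascade: the terminal block evolves autonomously, while each earlier block is driven only by blocks of larger index. Accordingly, I would argue by backward induction on the block index $i$, from $i=q$ down to $i=1$, checking at each stage that a separable iterate of the form \eqref{eq:max_trust_vector} solves the recurrence and that its growth rate $\xi_i$ is fixed by \eqref{eq:reducible_eigenvalues}. For the \emph{base case}, every off-diagonal block to the right of the terminal block is absent, so $\mathbf{t}_q(k+1)=\mathbf{D}_{qq}\otimes\mathbf{t}_q(k)$. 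Since $\mathbf{D}_{qq}$ is irreducible by Definition~\ref{eq:reducible_normal_form}, Theorem~\ref{theo:perron_frobenius_max_plus} supplies a unique eigenvalue $\lambda_q$ with a finite eigenvector $v_q$ satisfying $\mathbf{D}_{qq}\otimes v_q=\lambda_q\otimes v_q$. Iterating from $\mathbf{t}_q(0)=v_q$ yields $\mathbf{t}_q(k)=\lambda_q^{\otimes k}\otimes v_q=v_q\otimes\xi_q^{\otimes k}$ with $\xi_q=\lambda_q$, matching \eqref{eq:reducible_eigenvalues} since $\mathcal{H}=\emptyset$ for $i=q$.

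For the \emph{inductive step}, assume $\mathbf{t}_j(k)=v_j\otimes\xi_j^{\otimes k}$ for every $j>i$. Substituting into \eqref{eq:reducible_system} gives the inhomogeneous max-plus recurrence
\[
\mathbf{t}_i(k+1)=\mathbf{D}_{ii}\otimes\mathbf{t}_i(k)\oplus\bigoplus_{j\in\mathcal{H}}\mathbf{D}_{ij}\otimes v_j\otimes\xi_j^{\otimes k},
\]
whose forcing term is a max-plus superposition of geometric sequences with rates $\xi_j$, $j\in\mathcal{H}$. Inserting the ansatz $\mathbf{t}_i(k)=v_i\otimes\xi_i^{\otimes k}$ and comparing growth rates on both sides forces the left-hand rate to equal the largest rate present on the right, namely $\lambda_i\oplus\bigoplus_{j\in\mathcal{H}}\xi_j$; because $\xi_j\geq\lambda_j$ holds by construction at every earlier stage, this coincides with \eqref{eq:reducible_eigenvalues}. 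Intuitively, the dominant rate of a triangular max-plus system is the maximum diagonal eigenvalue reachable from block $i$ through the nonzero connecting blocks.

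With $\xi_i$ fixed, I would establish \emph{finiteness and convergence} of $v_i$ by dividing out the dominant rate, setting $w_i(k)=\mathbf{t}_i(k)\otimes\xi_i^{\otimes(-k)}$, and showing that $w_i(k)$ stabilises to a finite limit. The normalised homogeneous operator $\xi_i^{\otimes(-1)}\otimes\mathbf{D}_{ii}$ has max-plus spectral radius at most $e$ (since $\xi_i\geq\lambda_i$), so it cannot generate unbounded growth, while each normalised forcing term $\xi_i^{\otimes(-k)}\otimes\mathbf{D}_{ij}\otimes v_j\otimes\xi_j^{\otimes k}$ remains bounded whenever $\xi_j\leq\xi_i$. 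Finiteness of every coordinate of $v_i$ — that no entry collapses to $\varepsilon$ — would follow from the irreducibility of $\mathbf{D}_{ii}$ together with the presence of the connecting blocks $\mathbf{D}_{ij}\neq\mathcal{E}$, which guarantee that the fastest upstream contribution actually reaches every node of block $i$.

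I expect this last argument to be the crux. In max-plus, eigenvectors of reducible matrices routinely carry $\varepsilon$ entries, so ruling this out genuinely requires the connectivity encoded in the normal form rather than a generic estimate. The most delicate subcase is a tie $\lambda_i=\xi_j$ between the internal and an external rate: there the transient can a priori contain a term linear in $k$, and one must verify — using idempotency of $\oplus$ and the strict simplicity of the Perron-Frobenius eigenvalue from Theorem~\ref{theo:perron_frobenius_max_plus} — that any such term is absorbed into the finite vector $v_i$ rather than altering the exponential rate $\xi_i$.
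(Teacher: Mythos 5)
Your proposal follows essentially the same route as the paper's own proof: a backward, block-by-block treatment of the normal form, Perron-Frobenius (Theorem~\ref{theo:perron_frobenius_max_plus}) applied to each irreducible diagonal block, and a case split on whether the block's own eigenvalue $\lambda_i$ dominates the downstream rates $\xi_j$ --- exactly the paper's dichotomy between $\lambda > \xi_j$ and $\lambda \leq \xi_j$, where the paper scales the eigenvector so that $\mathbf{v}\otimes\lambda\geq\bigoplus_{j}\mathbf{D}_{ij}\mathbf{r}_j$ to make the geometric ansatz solve the recurrence. Your explicit rate formula $\xi_i=\lambda_i\oplus\bigoplus_{j\in\mathcal{H}}\xi_j$, and your attention to finiteness of $v_i$ and to ties (resolved via idempotency of $\oplus$), are if anything more careful than the paper's sketch, but they constitute the same argument rather than a different one.
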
 

\begin{proof}
Set $\mathbf{r}=\mathbf{w}\otimes\xi$ as our initial vector of trust values for $\xi\in\mathbb{R}^n$ and a random vector $\mathbf{w}\in\mathbb{R}_{max}^n$, and also let $\mathbf{D}_{ii}$ be a matrix. $\mathbf{D}_{ii}$ is irreducible, hence Theorem \ref{theo:perron_frobenius_max_plus} guarantees the existence of an eigenvalue $\lambda\in\mathbb{R}$ with eigenvector $\mathbf{v}\in\mathbb{R}^n$, chosen in accordance with the initial vector of trust scores: $\mathbf{v}\otimes\lambda\geq\bigoplus_{j=i+1}^{q}\mathbf{D}_{ij}\mathbf{r}_j$. Whenever $\ > \xi_j$ for $j\in\{i+1,\ldots,q\}$, $\mathbf{v}$ satisfies both
{\small
\[\mathbf{v}\otimes\lambda^{\otimes k+1}=\mathbf{D}_{ii}\otimes\mathbf{v}\otimes\lambda^{\otimes k}, \small\mathbf{v}\otimes\lambda^{\otimes k}\geq\bigoplus_{j=i+1}^{q}\mathbf{D}_{ij}\mathbf{w}_j\otimes\xi_j^{\otimes k}\]}
which, in turn, implies that
\[\mathbf{v}\otimes\lambda^{\otimes k+1}= max\{\mathbf{D}_{ii}\otimes\mathbf{v}\otimes\lambda^{\otimes k}, \bigoplus_{j=i+1}^{q}\mathbf{D}_{ij}\mathbf{w}_j\otimes\xi_j^{\otimes k}\}\]
or, equivalently, if we set $\mathbf{t}(k)\equiv\mathbf{v}\otimes\lambda^{\otimes k}$ equation \eqref{eq:reducible_system} is obtained. Note that this procedure is also applicable if the diagonal blocks are scalars by making $\lambda=\varepsilon$ and $\mathbf{v}=\bigoplus_{j=i+1}^{q}\mathbf{D}_{ij}\mathbf{w}_j$. 

When $\lambda \leq \xi_j$ for $j\in\{i+1,\ldots, q\}$ we could still obtain $\mathbf{v}$ as before given that $\bigoplus_{j=i+1}^{q}\mathbf{D}_{ij}\mathbf{w}_j$ has at least one finite element. This choice, however, would also involve 
\[\mathbf{v}\otimes\bigoplus_{j=i+1}^{q}\xi_j\geq \mathbf{D}_{ii}\otimes\mathbf{v}\otimes\lambda\oplus \bigoplus_{j=i+1}^{q}\mathbf{D}_{ij}\mathbf{w}_j\xi_j\]
which again leads to equation \eqref{eq:reducible_system}, if we let $\mathbf{t}(k)\equiv\mathbf{v}\otimes\mathbf{\mu}^{\otimes k}$ and 	$\mu_i=\bigoplus_{j\in\mathcal{H}}\xi_j\oplus\lambda_j$ for all $k\geq0$, $i\in\{1,2,\hdots,n\}$, and $\mathcal{H}$ defined as in the statement of the theorem.
\end{proof} 

\begin{algorithm}[H]
\caption{Power Method for regular irreducible matrices in Max-Plus}
\label{algo:maxpower}
\begin{algorithmic}[1]
\Require $\mathbf{r}$: Arbitrary vector of trust values, $\mathbf{C}$: Trust Matrix.
\Ensure $\lambda$: Eigenvalue of $\mathbf{C}$, $\mathbf{v}$: Eigenvector of $\mathbf{C}$.
\Procedure{max\_power}{}
\State $p\gets 0$
\State $\mathbf{v}_p\gets \mathbf{r}$
\Repeat
\State $\mathbf{v}_{p+1}\gets \mathbf{C}'\mathbf{v}_p$
\State $p\gets p+1$
\Until{$\mathbf{v}_q\equiv c\otimes\mathbf{v}_p \land c\geq 0$}
\State $\lambda\gets \frac{c}{p-q}$
\State $\mathbf{v}\gets 
\bigoplus_{i=1}^{p-q}\left(\lambda^{\otimes(p-q-i)}\otimes\mathbf{v}_{q+i-1}\right)$\\
\Return $\lambda, \mathbf{v}$
\EndProcedure
\end{algorithmic}
\end{algorithm}

Theorem \ref{theo:reducible_system} indicates that trust can be measured over reducible matrices as prescribed by the Eigentrust algorithm, invoking the spectral properties of the irreducible blocks of its normal form. In graph theoretical terms, this means that we can recover the main asymptotic traits of the system by looking into the connected components of the underlying network. Based on this result we introduce Algorithm \ref{algo:MaxTrust}, referred to as \emph{MaxTrust}.

\begin{center}
\centering
\begin{algorithm}
\caption{Trust-Measuring Algorithm in Max-Plus}
\label{algo:MaxTrust}
\begin{algorithmic}[1]
\Require $\mathbf{C}$: Regular Trust Matrix, $\mathbf{w}$: Vector of initial trust values, $T$: Terminal time.
\Ensure $\mathbf{t}$: Rank of agents at terminal time.
\Procedure{$MaxTrust$}{}
\State $\mathbf{D}\gets$ \Call{get\_normal\_form}{$\mathbf{C}$}
\State $\lambda_n, \mathbf{v}_n\gets $ \Call{max\_power}{$\mathbf{D}_{nn}$} \label{alg:MaxTrust:op0}
\State $\xi_n\gets \lambda_n$
\State $j\gets n-1$;
\While{$j>1$}
\State $\lambda_j\gets$ \Call{max\_power}{$\mathbf{D}_{jj}$}.
\If{$\lambda_{j}>\xi_{j+1}$} \label{alg:MaxTrust:op1}
\State $\xi_j\gets \lambda_{j}$ 
\State $\mathbf{v}_j\gets \bigoplus_{k=1}^{n}\mathbf{D}_{jk}\otimes \mathbf{w}_{k} \otimes \lambda_j^{\otimes j-1}$ \label{alg:MaxTrust:op2}
\Else
\State $\xi_j\gets \lambda_{j+1}$
\State $\mathbf{v}_j\gets (\xi_j)^{-1}\otimes\bigoplus_{k=1}^{n}\mathbf{D}_{jk}\otimes \mathbf{w}_{k} \otimes \lambda_j^{\otimes j-1}$ \label{alg:MaxTrust:op3}
\EndIf
\State $j\gets j-1$
\EndWhile
\Return $\mathbf{t}\gets \mathbf{v}\otimes\mathbf{\xi}^{\otimes T}$ \label{alg:MaxTrust:op4}
\EndProcedure
\end{algorithmic}
\end{algorithm}
\end{center}

As with Eigentrust, the vector of initial trust values $\mathbf{w}$ can be selected randomly. After converting the trust matrix to normal form, a max-plus adaptation of the Power Method (Algorithm \ref{algo:maxpower}) is applied to the last irreducible block of $\mathbf{D}$, so as to obtain its corresponding eigenvalues and eigenvectors. A similar operation is carried out for the rest of the diagonal blocks in $\mathbf{D}$, serving as the basis for the eigenvectors of the supradiagonal blocks (lines \ref{alg:MaxTrust:op2} and \ref{alg:MaxTrust:op3}). Finally, note that the expressions in lines \ref{alg:MaxTrust:op1} and \ref{alg:MaxTrust:op4} mirror Equation \eqref{eq:reducible_eigenvalues} in the conventional algebra and Equation \eqref{eq:max_trust_vector} in max-plus, respectively. 

\section{Evaluation}
\label{sec:eval}

We performed an empirical evaluation of our algorithm, which we refer to as MaxTrust, comparing it against Eigentrust in a simple simulated computer peer-to-peer network scenario. We begin by describing our experimental setup before detailing our results.

\subsection{Experimental Setup}
Our experiments evaluated how trust propagates across a peer-to-peer network of routers, whose goal is to make routing decisions for data by deciding which of their neighbours such data should be transmitted to. Routers in the network interact with each other by exchanging connectivity information, consisting of trust measures regarding the network. This trust measure mirrors $t_{ij}$ --- if router $i$ is broadcasting such a trust measure to its neighbours, index $j$ of the vector will contain either the level of trust $i$ ascribes to $j$; or  $\varepsilon$ if $i$ has no knowledge of $j$ through direct or indirect experiences (in the case of MaxTrust), or 0 (in the case of Eigentrust).

Within the network, routers could either be trustworthy or malicious. The former broadcast trust measures correctly, while the latter transmit either a 0 or a value which begins at 0.5 and decays towards 0 as the router repeatedly interacts with others (to simulate the malicious router trying to undermine the network more actively). During experiment initalisation, each trustworthy router began by imputing a random level of trust to all of its neighbours (with $\varepsilon$ in the remaining indices of its trust vector). 

Each experiment was run over 100 time steps. Within each time step, all routers were given 10 opportunities to exchange information with their most trusted neighbour. After each such interaction, the trust they ascribed to their neighbour was either increased  (if the neighbour was a trustworthy router) or decreased (if the neighbour was a malicious router) by 0.0001. There was also a 0.0025 chance of trust decreasing (effectively due to a mis-catagorisation of the neighbour).

After each time step, all routers computed new trust values for the system using  Eigentrust or MaxTrust, and the process repeated. We considered three different router topologies, namely free trees (with a branching factor of 2); a toroidal network; and a random network of connections. Each network began by containing 4 routers with 8 links between them (generated according to the topology).

Each topology was then evaluated under 3 different scenarios.
\begin{itemize}
\item \textbf{Scenario 1} The network remained unchanged over all 100 trials.
\item \textbf{Scenario 2} Every 5 interactions, between 2 and 6 new routers were added to the system. Half of the new routers in the system were set to be malicious. 
\item \textbf{Scenario 3} As in Scenario 2, between 2 and 6 new routers were added to the system every 5 interactions. None of the routers in the initial system were malicious, each new router had a 1/3 likelihood of being malicious.
\end{itemize}

\noindent In all cases, routers were added in such a way so as to preserve the network's topology.

\subsection{Results}
We ran a total of 18 experiments (for each topology, scenario and trust algorithm combination), averaging 100 runs of each experimental condition (over 100 time steps) to obtain the results shown in Figure \ref{fig:convergence}. The vertical axis in each plot compares averaged distance between the dominant eigenvector obtained by Eigentrust or MaxTrust ($v$) with the dominant eigenvector computed for the trust matrix obtained at the end of each experimental run ($v_{\lambda_0}$). This latter dominant eigenvector was obtained from the corresponding eigenvalue computed with
the Newton method \cite{wilkinson1965algebraic}.

\begin{figure}[t]
	\centering
	\includegraphics[trim=0 90 0 105,clip,width=\columnwidth]{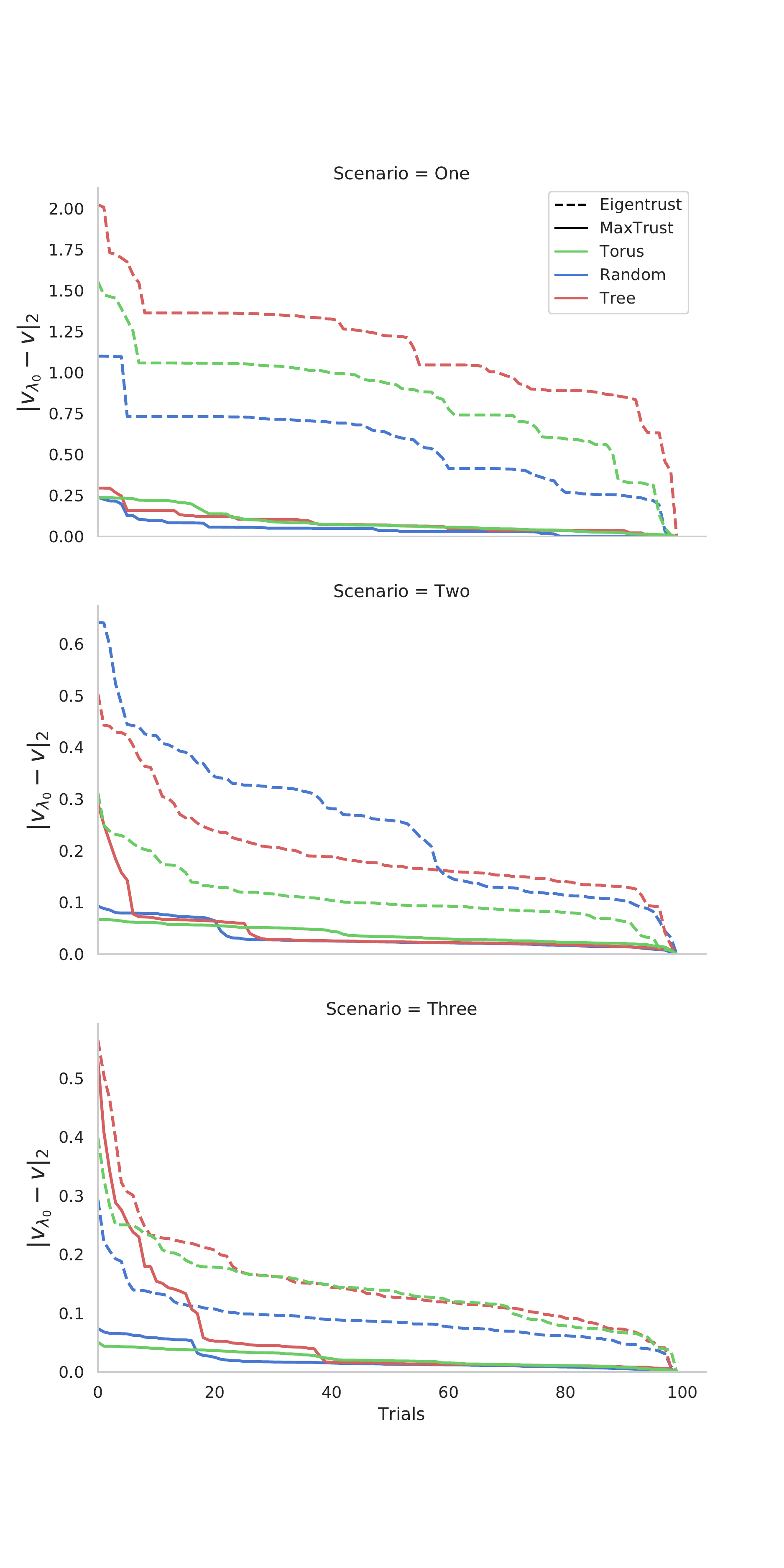}\\ 
	\caption{Relative Convergence to the Dominant Eigenvalue.}
	\label{fig:convergence}
\end{figure}

\begin{table}[t]
	\centering
	\resizebox{8cm}{2.5cm}{
			\begin{tabular}{|P{2cm}|M{1.8cm}|M{2cm}|M{2cm}|M{2.5cm}|}
				\hline
				\textbf{Scenario} & \textbf{Network Structure} & \textbf{Mean} & \textbf{Standard Deviation} & \textbf{95\% HDP} \\ 
				\hline
				\multirow{2}{*}{\parbox{.8\linewidth}{\centering\vspace{1cm}\textbf{One}}}
				& Random MT & 0.035 &  0.001 &  [0.034, 0.037]\\
				& Random ET & 0.516 &  0.039 &  [0.442, 0.583]\\ \cline{2-5} 
				& Torus MT & 0.074 &  0.003 &  [0.067, 0.079]\\  
				& Torus ET & 0.802 &  0.066 &  [0.662, 0.925]\\ \cline{2-5} 
				& Tree MT & 0.068 &  0.0015 &  [0.065, 0.070 ]\\ 
				& Tree ET &  1.106 &  0.059 & [1.012, 1.225]\\ \hline
				\multirow{2}{*}{\parbox{.8\linewidth}{\centering\vspace{1cm}\textbf{Two}}}
				& Random MT & 0.028 &  0.0003 &  [0.027, 0.028]\\  
				& Random ET & 0.221 &  0.012 & [0.203, 0.246]\\ \cline{2-5} 
				& Torus MT & 0.036 &  0.0001 &  [0.035, 0.036]\\ 
				& Torus ET & 0.097 &  0.0012 & [0.095, 0.099]\\ \cline{2-5} 
				& Tree MT & 0.029 &  0.0002 &  [0.029, 0.030]\\ 
				& Tree ET & 0.176 &  0.0029 &  [0.171, 0.180]\\ \hline
				\multirow{2}{*}{\parbox{.8\linewidth}{\centering\vspace{1cm}\textbf{Three}}}
				& Random MT & 0.016 &  0.0002 & [0.016, 0.017]\\ 
				& Random ET & 0.080 &  0.0006 &  [0.078, 0.080]\\ \cline{2-5} 
				& Torus MT & 0.020 &  0.0001 & [0.020, 0.020]\\
				& Torus ET & 0.125 &  0.0019 &  [0.121,  0.129]\\ \cline{2-5} 
				& Tree MT & 0.031 &  0.001 & [0.028, 0.033]\\ 
				& Tree ET & 0.129 &  0.002 &  [0.125, 0.135]\\ \hline
			\end{tabular}
	}
			\caption{Summary of results for MaxTrust (MT) and Eigentrust (ET) over all scenarios and topologies.}
	\label{tab:test}
\end{table}

While all methods converge to the final dominant eigenvector as more information is exchanged between routers, it is clear that by differentiating between distrust (i.e., a trust value of 0) and no trust information (i.e., a trust value of $\varepsilon$), MaxTrust significantly outperforms Eigentrust across all topologies and scenarios. Within Scenario 1, trees gave the sparsest connectivity structure, meaning that Eigentrust struggled most in this case.  As connectivity increased in Scenario 1, Eigentrust's performance improved, but still converged much more slowly than MaxTrust. 

Within Scenario 2, Tori provided more connectivity than trees, again leading to better performance for Eigentrust in the former case, and both outperforming random networks. Given the results of the third scenario, we believe that this behaviour is provoked by the disruption in the transmission of information caused by the introduction of malicious routers. Indeed, when the majority of routers are not malicious, as in Scenario 3, the extra connectivity provided by random networks enhance their performance, while trees and tori induce no considerable changes. 

Table \ref{tab:test} provides further evidence of these observations. It summarises the essential statistics detailing the posterior distributions of the distance $|v_{\lambda_0}-v|$ for MaxTrust (MT) and Eigentrust (ET). When treated as random variables the distributions of such deviations indicate how different the two results may be. Here, the mean and standard deviation are calculated over all 100 time steps (again over all 100 experiment runs). The  low means and standard deviations of MaxTrust  across all scenarios and topologies demonstrate its faster rate of convergence when compared to Eigentrust.

Ultimately, our results demonstrate the benefit of a trust and reputation system being able to differentiate between the lack of trust in an agent (i.e., a 0 trust value), and the lack of information about trust in an agent (captured via $\varepsilon$). While Eigentrust conflates these two concepts, MaxTrust deals with them separately. The differentiation between the two concepts is particularly important in open dynamic multi-agent systems.

\section{Conclusions}
Algorithms for trust measurement and computation are critical for the effective operation of open multi-agent systems. Due to its simplicity and effectiveness, Eigentrust is perhaps the most widely used trust-measuring algorithm. Building on an analysis of the situations where Eigentrust performs poorly (notably in cases where no connected components exist), we introduced the MaxTrust algorithm. This algorithm shares the same basic intuitions used to create Eigentrust, but builds on the Max-Plus Algebra, and in doing so, provides improved convergence to the \textit{ex-post} or actual trust values when compared to Eigentrust.

In this work we considered only the simplest (non-distributed) form of Eigentrust. As future work, we intend to extend MaxTrust to deal with distributed trust update; we believe that this extension can be performed in a manner similar to that in which Eigentrust's counterpart was obtained, but not without first evaluating MaxTrust for such a case. 

We also intend to investigate the theoretical properties of MaxTrust in future work. Such properties include identifying guarantees on convergence rates, and the effects of different attacks against the algorithm. Given the shared intuitions between MaxTrust and Eigentrust, we believe that many results will carry through, but stress that a theoretical and empirical evaluation of MaxTrust under different scenarios is critical if the performance improvements, it seems to hold, are to be realised in practical applications.
 

\bibliographystyle{named}
\bibliography{bibliography}

\begin{thebibliography}{}

\bibitem[\protect\citeauthoryear{Akian \bgroup \em et al.\egroup
  }{1998}]{akian1998asymptotics}
Marianne Akian, Ravindra Bapat, and St{\'e}phane Gaubert.
\newblock Asymptotics of the perron eigenvalue and eigenvector using
  max-algebra.
\newblock {\em Comptes Rendus de l'Acad{\'e}mie des Sciences-Series
  I-Mathematics}, 327(11):927--932, 1998.

\bibitem[\protect\citeauthoryear{Azad \bgroup \em et al.\egroup
  }{2017}]{azad2017m2m}
Muhammad~Ajmal Azad, Samiran Bag, and Feng Hao.
\newblock M2m-rep: Reputation of machines in the internet of things.
\newblock In {\em Proceedings of the 12th international conference on
  availability, reliability and security}, page~28. ACM, 2017.

\bibitem[\protect\citeauthoryear{Berman and
  Plemmons}{1994}]{berman1994nonnegative}
Abraham Berman and Robert~J Plemmons.
\newblock {\em Nonnegative matrices in the mathematical sciences}, volume~9.
\newblock Siam, 1994.

\bibitem[\protect\citeauthoryear{Chu and Watterson}{1993}]{chu1993multivariate}
Moody~T Chu and J~Loren Watterson.
\newblock On a multivariate eigenvalue problem, part i: Algebraic theory and a
  power method.
\newblock {\em SIAM Journal on scientific computing}, 14(5):1089--1106, 1993.

\bibitem[\protect\citeauthoryear{Gaubert and Plus}{1997}]{gaubert1997methods}
Stephane Gaubert and Max Plus.
\newblock Methods and applications of (max,+) linear algebra.
\newblock In {\em Annual symposium on theoretical aspects of computer science},
  pages 261--282. Springer, 1997.

\bibitem[\protect\citeauthoryear{Kamvar \bgroup \em et al.\egroup
  }{2003}]{kamvar2003eigentrust}
Sepandar~D Kamvar, Mario~T Schlosser, and Hector Garcia-Molina.
\newblock The eigentrust algorithm for reputation management in {P2P} networks.
\newblock In {\em Proceedings of the 12th international conference on World
  Wide Web}, pages 640--651. ACM, 2003.

\bibitem[\protect\citeauthoryear{Konigsberg}{2009}]{konigsberg2009generalized}
Zvi~Retchkiman Konigsberg.
\newblock A generalized eigenmode algorithm for reducible regular matrices over
  the max-plus algebra.
\newblock In {\em 2009 Chinese Control and Decision Conference}, pages
  5598--5603. IEEE, 2009.

\bibitem[\protect\citeauthoryear{Marti and
  Garcia-Molina}{2006}]{marti2006taxonomy}
Sergio Marti and Hector Garcia-Molina.
\newblock Taxonomy of trust: Categorizing p2p reputation systems.
\newblock {\em Computer Networks}, 50(4):472--484, 2006.

\bibitem[\protect\citeauthoryear{Pillai \bgroup \em et al.\egroup
  }{2005}]{pillai2005perron}
S~Unnikrishna Pillai, Torsten Suel, and Seunghun Cha.
\newblock The perron-frobenius theorem: some of its applications.
\newblock {\em IEEE Signal Processing Magazine}, 22(2):62--75, 2005.

\bibitem[\protect\citeauthoryear{Seneta}{2006}]{seneta2006non}
Eugene Seneta.
\newblock {\em Non-negative matrices and Markov chains}.
\newblock Springer Science \& Business Media, 2006.

\bibitem[\protect\citeauthoryear{Theodorakopoulos and
  Baras}{2006}]{theodorakopoulos2006trust}
George Theodorakopoulos and John~S Baras.
\newblock On trust models and trust evaluation metrics for ad hoc networks.
\newblock {\em IEEE Journal on selected areas in Communications},
  24(2):318--328, 2006.

\bibitem[\protect\citeauthoryear{Wilkinson}{1965}]{wilkinson1965algebraic}
James~Hardy Wilkinson.
\newblock {\em The algebraic eigenvalue problem}, volume 662.
\newblock Oxford Clarendon, 1965.

\end{thebibliography}

\end{document}